\newcommand{\Rmnum}[1]{\expandafter\@slowromancap\romannumeral #1@}
\newtheorem{theorem}{Theorem}
\newtheorem{remark}{Remark}
\newenvironment{proof}{{ \textbf{Proof:}}}{\hfill \par}
 \let\MYoriglatexcaption\caption
\renewcommand{\caption}[2][\relax]{\MYoriglatexcaption[#2]{#2}}
\let\MYorigsubfloat\subfloat
\renewcommand{\subfloat}[2][\relax]{\MYorigsubfloat[]{#2}}
\renewcommand{\figurename}{Fig. }
\definecolor{lime}{HTML}{A6CE39}
\DeclareRobustCommand{\orcidicon}{
    \begin{tikzpicture}
    \draw[lime, fill=lime] (0,0) 
    circle [radius=0.16] 
    node[white] {{\fontfamily{qag}\selectfont \tiny ID}};    \draw[white, fill=white] (-0.0625,0.095) 
    circle [radius=0.007];    \end{tikzpicture}
    \hspace{-2mm}}
\xdef\csname orcid\x\endcsname{\noexpand\href{https://orcid.org/\csname orcidauthor\x\endcsname}{\noexpand\orcidicon}}
\begin{document}
\title{Robust Event Triggering Control for Lateral Dynamics of Intelligent Vehicles with Designable Inter-event Times}

\author{Xing Chu\orcidA{},~\IEEEmembership{Member,~IEEE,}
        Zhi Liu\orcidD{},
        Lei Mao,
        Xin Jin\orcidC{},~\IEEEmembership{Member,~IEEE},~\\
        Zhaoxia Peng\orcidB{},
        and~ Guoguang Wen\orcidE{},~\IEEEmembership{Member,~IEEE}
\thanks{This work is supported by the Youth Talent Project of Yunnan Province under Grant No.C619300A020, the National Natural Science Foundation of China under Grant No.62101481, and Key Laboratory in Software Engineering of Yunnan Province in China (No.2020SE408; No.2020SE317). (Corresponding author: Guoguang Wen.)}\thanks{Xing Chu, Zhi Liu, and Xin Jin are with the School of Software, Yunnan University, Kunming 650091, China (e-mail: zhi\_liu97@163.com; chx@ynu.edu.cn; xinxin\_jin@163.com)}\thanks{Lei Mao is with the traffic engineering design and research in Yunnan Communications Investment \& Construction Group Co. Ltd. (e-mail: 1191633510@qq.com)}\thanks{Zhaoxia Peng is with the School of Transportation Science and Engineering, Beihang University, Beijing 100191, China (e-mail: pengzhaoxia@buaa.edu.cn).}\thanks{Guoguang Wen is with the Department of Mathematics, Beijing Jiaotong University, Beijing 100044, China (e-mail:guoguang.wen@bjtu.edu.cn).}}

\markboth{}
{Shell \MakeLowercase{\textit{et al.}}: Bare Demo of IEEEtran.cls for IEEE Journals}

\maketitle
\begin{abstract}
In this brief, an improved event-triggered update mechanism (ETM) for the linear quadratic regulator is proposed to solve the lateral motion control problem of intelligent vehicle under bounded disturbances. Based on a novel event function using a clock-like variable to determine the triggering time, we further introduce two new design parameters to improve control performance. Distinct from existing event-based control mechanisms, the inter-event times (IETs) derived from the above control framework are designable, meaning that the proposed ETM can be deployed on practical vehicle more easily and effectively. In addition, the improved IETs-designable ETM features a global robust event-separation property that is extremely required for practical lateral motion control of vehicle subject to diverse disturbances. Theoretical analysis proves the feasibility and stability of the proposed control strategy for trajectory tracking under bounded disturbances. Finally, simulation results verify the theoretical results and show the advantages of the proposed control strategy.
\end{abstract}

\begin{IEEEkeywords}
Intelligent vehicle, lateral motion control, event triggering mechanism, designable inter-event times, disturbance.
\end{IEEEkeywords}

%
\IEEEpeerreviewmaketitle

\section{INTRODUCTION}
\IEEEPARstart {I}{ntelligent} vehicle is a promising solution for improving the safety and efficiency of traffic in modern intelligent transportation systems. In addition, it makes valuable contributions to reducing air pollution and saving energy for improving the natural environment and social economy. Thereinto, lateral motion control, as a crucial problem in intelligent vehicle, still faces many challenges. This brief focuses on the lateral motion control of intelligent vehicle.

The core objective of lateral motion control is achieving autonomous driving along the desired path. This research field has been expanded and innovated continuously in recent years. The kinematics and dynamics of vehicular lateral motion control were discussed in \cite{rajamani2011vehicle}. Owing to the importance of lateral motion control in autonomous driving, diverse control approaches have been proposed and developed, including PID control \cite{marino2011nested}, $H_{\infty}$ control \cite{zhang2018vehicle,MemoryET}, linear quadratic regulator (LQR) \cite{riofrio2017lqr}, model predictive control (MPC) \cite{fal2007mpc,Rout2021MPC,zhang2021adaptive}, and sliding-mode control (SMC) \cite{tagne2013higher,zhang2020SMC}, and just name a few.

The above review shows rich results of lateral motion control in intelligent vehicle area. Nevertheless, after these sort of cutting-edge control algorithms are deployed in the real digital platforms of vehicles, several practical issues should be further considered. For example, the computing unit and the bandwidth of controller area network (CAN) bus are shared with other control loops and extremely limited. Thus, reducing the unnecessary control computing and data traffic so as to improve the efficiency of resource utilization and energy consumption, while guaranteeing the original control performance, is what we desperately need. Under this background, event-based control mechanism, 
as shown in \cite{tabuada2007et,tsi2010datadriven,heemels2012intro,girard2015dynamic,berneburg2019robust} and the references therein, is introduced to the recent varied control researches \cite{chu2018distributed,song2021ETfixtime,lh2021consensus,chu2019distributed}. Instead of feeding back all the sampling data regardless of the practical need (for example, stability) as in the current rich results, event-based control schemes determine the time instant, at which the data of control system is sampled and sent back to update the control input, based on certain triggering rules.

Unfortunately, there are still some drawbacks to dealing with real control problems under traditional event-triggered schemes. For example, choosing a larger threshold means tolerating a larger state error, which obviously conserves resources in such a control scheme, but also influences control performance, such as the accuracy of trajectory tracking. So far, hence, several researchers have studied adaptive event-based control schemes to solve this problem. In \cite{zhang2021adaptive}, adaptive event-based model predictive control was developed for the control of unmanned vehicles. An adaptive-event-trigger-based fuzzy nonlinear lateral dynamic control algorithm was proposed in \cite{li2021fuzzy}. On the other hand, there are still some issues that are less frequently mentioned, such as the practical constraints of inter-event times (IETs) imposed on the sensor and communication, which prevent the traditional event-triggered schemes from being applied flexibly in practical control systems. More importantly, global robust event-separation property of event-triggered control system, which determines the feasibility and reliability of the designed event triggering scheme, cannot be ignored.

A potential solution to cope with the all above-mentioned issues of event-based control mechanisms was provided in \cite{DesignableMIET}. Specifically, a novel design and analysis approach of an event function was provided, which guarantees original control performance through a clock-like triggering signal whose variable range and evolution rate can be flexibly adjusted to obtain the desired IETs. Meanwhile, the evet-triggered control studied in \cite{DesignableMIET} holds the global robust event-separation property. This type of IETs-designable event triggering mechanism (ETM) is therefore extremely suitable for the deployment on practical systems, which constructs the main motivation for this work.

In this brief, inspired by the idea from the IET-designable ETM, we aim to tackle the execution problem of lateral LQR of intelligent vehicle. The contributions of this brief are the following:
\begin{itemize}
\item An event-based execution mechanism for a LQR controller to solve the lateral motion control problem of intelligent vehicle is designed. Compared with the time-triggered strategy, this mechanism is proved to reduce unnecessary communication and control update, so as to reduce energy consumption and resource occupation, while stabilizing the system.

\item By introducing two new design parameters to improve the performance of the IETs-designable ETM, the IETs can be adjusted more freely than in \cite{DesignableMIET}. Thus, designability of the IETs is further expanded, which benefits its application in actual systems.

\item Many existing ETMs do not feature a robust event-separation property, so any tiny disturbance maybe leads to the Zeno-behavior. Considering the bounded disturbances in the real movement of intelligent vehicle, we verify the stability and feasibility of the proposed event-based controller with improved IETs-designable ETM.
\end{itemize}

\section{PROBLEM FORMULATION}

\subsection{VEHICLE DYNAMIC MODEL DESCRIPTION}
In the dynamic bicycle model illustrated in \figurename \ref{fig:Model}, it is assumed that the vehicle is assumed to be symmetrical, and tire sideslip angles on the same axle are equal. We neglect the roll and pitch dynamics, and assume the sideslip $\beta$, yaw $\psi$, and control input steering angle $\delta$ are small. A linear parameter variation (LPV) model is derived from the linear tire force model. \cite{rajamani2011vehicle} provides dynamic equations for the bicycle model in terms of sideslip angle and yaw rate.
\begin{equation}
\begin{cases}\begin{array}{l}\label{beta&psi}
\dot{\beta}=-\frac{\mu\left(C_{f}+C_{r}\right)}{m V_{x}} \beta-\left(1+\frac{\mu\left(l_{f} C_{f}-l_{r} C_{r}\right)}{m V_{x}^{2}}\right) \dot{\psi}+\frac{\mu C_{f}}{m V_{x}} \delta \vspace{1ex} \\
\ddot{\psi}=-\frac{\mu\left(l_{f} C_{f}-l_{r} C_{r}\right)}{I_{z}} \beta-\frac{\mu\left(l_{f}^{2} C_{f}+l_{r}^{2} C_{r}\right)}{I_{z} V_{x}} \dot{\psi}+\frac{\mu l_{f} C_{f}}{I_{z}} \delta
\end{array}
\end{cases}.
\end{equation}
The corresponding vehicle parameters and nomenclature are listed in Table \ref{Parameters}. 

\begin{figure}
    \centering
    \includegraphics[width=6cm]{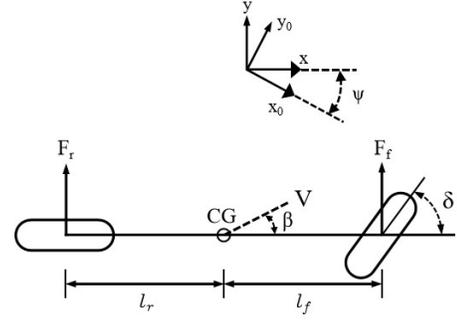}
    \caption{Dynamic bicycle model.}
    \label{fig:Model}
\end{figure}

\begin{table}[]
\centering
\caption{\centering\textbf{PARTIAL PARAMETERS AND NOMENCLATURE}}
\label{Parameters}
\setlength{\tabcolsep}{8mm}{
\begin{tabular}{cc}
\hline \hline
Symbol     & Description                             \\ \hline
$\beta$      & Vehicle sideslip angle                        \\ 
$\dot{\psi}$ & Yaw rate                              \\ 
$\delta$       & Steering wheel angle                \\ 
$\mu$        & Road friction coefficient             \\ 
$V_x$        & Vehicle longitudinal velocity                 \\ 
$m$          & Vehicle mass                                  \\ 
CG          & Center of gravity of vehicle           \\ 
$I_z$        & Yaw moment inertia of vehicle                 \\ 
$l_f$        & Front axle-CG distance                \\ 
$l_r$        & Rear axle-CG distance                 \\ 
$C_f$        & Front tire cornering stiffness  \\ 
$C_r$        & Rear tire cornering stiffness   \\ 
\hline
\hline
\end{tabular}}
\end{table}

\subsection{CONTROL PROBLEM DEFINITION}
The objective of lateral control is to minimize a vehicle's lateral displacement relative to a specific reference path. The dynamics of a vehicle's lateral error displacement $e$ relative to a reference path is $\ddot{e}=V_x(\dot{\beta}+\dot{\psi})-V^2_x \rho$,
where the road curvature $\rho$ is a constant. And based on (\ref{beta&psi}), system state variables can be denoted $x=[\beta, \dot{\psi}, \dot{e}, e]^{\mathrm{T}}$, and the control input is the steering angle $\delta$. Then, owing to eliminating the lateral error, for a given road curvature $\rho$ and longitudinal velocity $V_x$, $\dot{e}=e=0$ is the control object. In \cite{rajamani2011vehicle}, the desired equilibrium point can be written as $x^\star=[\beta, \dot{\psi}, \dot{e}, e]^{\mathrm{T}}=[\beta^\star, \dot{\psi}^\star,0,0]^{\mathrm{T}}\nonumber$, and the control input $\delta$ at the desired equilibrium point is now $\beta^{\star}$, where $\beta^{\star}=\left(L_{r}-\dfrac{L_{f} m V_{x}^{2}}{\mu C_{r}\left(L_{f}+L_{r}\right)}\right) \rho$, 
$\dot{\psi}^{\star}=V_{x} \rho$,
and $\delta^{\star}=\left(L_{f}+L_{r}\right) \rho+\dfrac{m V_{x}^{2}\left(L_{r} C_{r}-L_{f} C_{f}\right)}{\mu C_{f} C_{r}\left(L_{f}+L_{r}\right)} \rho$.

By defining the new error variables $\tilde{\beta}=\beta-\beta^{\star}$, $\dot{\tilde{\psi}}=\dot{\psi}-\dot{\psi}^{\star}$, and $\tilde{\delta}=\delta-\delta^{\star}$, the new system state variables become $\tilde{x}=(\tilde{\beta}, \dot{\tilde{\psi}}, \dot{e}, e)^{\mathrm{T}}$. We write the error dynamics \cite{Tagne2016DesignAndComparison}, the origin of which is the equilibrium, as $\tilde{x}^{\star}=(\tilde{\beta}, \dot{\tilde{\psi}}, \dot{e}, e)^{\mathrm{T}}=(0,0,0,0)^{\mathrm{T}}$ as:
\begin{equation}\label{errorDynamics}
    \dot{\tilde{x}}(t)= A \tilde{x}(t)+B \tilde{\delta}(t),
\end{equation}
where
\begin{equation}\label{AB}
\begin{split}\begin{array}{l}
A=\left[\begin{array}{cccc}
-\frac{\mu\left(C_{f}+C_{r}\right)}{m V_{x}} & -1-\frac{\mu\left(l_{f} C_{f}-l_{r} C_{r}\right)}{m V_{x}^{2}} & 0 & 0  \vspace{1ex} \\
-\frac{\mu\left(l_{f} C_{f}-l_{r} C_{r}\right)}{I_{z}} & -\frac{\mu\left(l_{f}^{2} C_{f}+l_{r}^{2} C_{r}\right)}{I_{z} V_{x}} & 0 & 0 \vspace{1ex} \\
-\frac{\mu\left(C_{f}+C_{r}\right)}{m} & -\frac{\mu\left(l_{f} C_{f}-l_{r} C_{r}\right)}{m V_{x}} & 0 & 0 \vspace{1ex} \\
0 & 0 & 1 & 0
\end{array}\right], \vspace{1ex} \\ 
B=\left[\begin{array}{cccc}
\frac{\mu C_{f}}{m V_{x}} ;& \frac{\mu l_{f} C_{f}}{I_{z}};& \frac{\mu C_{f}}{m};&  0
\end{array}\right]^{\mathrm{T}} .
\end{array}\end{split}
\end{equation}

\begin{remark}
The steady-state solution is the desired reference state $x^\star$ in this model. As a result, $\tilde{\delta}$ can ignore the model's natural transient dynamics when using this as a reference input. Additional information about this section can be found in \cite{rajamani2011vehicle,Tagne2016DesignAndComparison,snider2009automatic}. 
\end{remark}

\subsection{LQR CONTROLLER}
To solve the control problem in (\ref{errorDynamics}), the LQR controller is designed. The linear quadratic cost functional index is constructed first:
\begin{equation}\label{indexJ}
    J=\int_{t_{0}}^{\infty}\left[\tilde{x}^{\mathrm{T}}(t) Q \tilde{x}(t)+\tilde{\delta}^{\mathrm{T}}(t) R \tilde{\delta}(t)\right] \mathrm{d} t.
\end{equation}
Then, the control input steering angle $\tilde{\delta}$ is calculated according to
\begin{equation}\label{feedback}
\tilde{\delta}(t)=-K \cdot \tilde{x}(t)=-R^{-1}B^{\mathrm{T}} P \cdot \tilde{x},
\end{equation}
where $P$ is derived by solving the Riccati equation:
\begin{equation}\label{Riccati}
    A^{\mathrm{T}} P+P A-{P B R}^{-1} B^{\mathrm{T}} P+Q=0.
\end{equation}

\section{MAIN RESULTS}

\subsection{IET-DESIGNABLE EVENT TRIGGERING MECHANISM}
In this subsection, we aim to apply the improved IET-designable ETM to determine the triggering instant when the feedback control law is updated, which is a malleable solution to the computational load and hardware limitations. This mechanism is based on a countdown variable with an upper bound. Thus, we create an event function $\mathbf{Z}(t)$ whose upper bound $\mathbf{\bar{Z}}$ can be freely set and apply the following event triggering rule:
\begin{equation}\label{triggerRule}
    \begin{aligned}
t_{0} &=0 ,\\
t_{k+1} &=i n f\left\{t>t_{k} \mid \mathbf{Z}(t)= 0\right\},
\end{aligned}
\end{equation}
where $\mathbf{Z}(t)$ is reset as the design parameter $\mathbf{\bar{Z}}>0$ at every triggering instant. The triggering rule (\ref{triggerRule}) will determine the time series $\left\{t_{k}\right\}, k \in {\mathbb{N}^{*}}$. To ensure the feasibility and stability of the IET-designable ETM, the event function dynamics is defined as $\dot{\mathbf{Z}}(t)=\omega(\varpi, \varepsilon)$, where $\varepsilon>0$ is also a design parameter. Detailed descriptions of the event function are given in the following subsection.

\subsection{STABILITY AND FEASIBILITY ANALYSIS}
In the movement process of intelligent vehicle, an inevitable disturbance is necessary, especially in the case of road curvature $\rho>0$. After adding the disturbances, the system dynamics (\ref{errorDynamics}) become
\begin{equation}\label{errorPERDynamics}
    \dot{\tilde{x}}(t)=A \tilde{x}(t)+B \tilde{\delta}(t)+G\xi(t),
\end{equation}
where $A$ and $B$ are shown in (\ref{AB}), and for all initial conditions $\tilde{x}(0)$ any bounded disturbance $\left | \xi \right | \le \bar{\xi}$, $G$ denotes a corresponding constant matrix. Moreover, we assume that the disturbance $\xi(t)$ is convergent to $0$. 

To stabilize the system (\ref{errorPERDynamics}), recalling (\ref{feedback}), the feedback control law is $\tilde{\delta}(t)=-K \tilde{x}(t)$, and the closed-loop control system considering the perturbation $\xi(t)$ can be written as $\dot{\tilde{x}}(t)= (A-BK) \tilde{x}(t)+G\xi(t)$. If a Lyapunov function $V = \tilde{x}^{\mathrm{T}} M \tilde{x}$ exists, where $M$ is a symmetric positive definite matrix,
\begin{equation}\label{LyapPQ}
    (A-BK)^{\mathrm{T}} M+M(A-BK)=-N
\end{equation}
is satisfied, where $N$ is an arbitrary symmetric positive-definite matrix. Under the IET-designable ETM (\ref{triggerRule}), the control input becomes 
\begin{equation}
    {\tilde{\delta }}(t)=-K \tilde{x}(t_{k}), \, t \in[t_{k},t_{k}+1).
\end{equation}
Then, define the event triggering error and its derivative as
\begin{equation}\begin{aligned}\label{eventError}
    \eta(t) &= \tilde{x}(t_{k})-\tilde{x}(t), \, t \in\left[t_{k}, t_{k+1}\right)  \\
    \dot{\eta}(t) &= -\dot{\tilde{x}}(t).
\end{aligned}
\end{equation}
We can the express the system (\ref{errorPERDynamics}) as
\begin{equation}\label{Lastsystem}
        \dot{\tilde{x}}(t)= A\tilde{x}(t)-BK \tilde{x}(t) -BK\eta(t) +G\xi(t).
\end{equation}
\begin{theorem}\label{theorem-1}
For all initial conditions $x(0)$, applying the IET-designable ETM (\ref{triggerRule}) in the system (\ref{Lastsystem}), $\tilde{x}(t)$ asymptotically converges to $\tilde{x}^{\star}$. Besides, designable IETs with lower-bounded $\tau$ are available:
\begin{equation}\label{MIET}
    \begin{aligned}
\tau &=\sqrt{\frac{1}{\sigma \varepsilon}}\left\{\operatorname{atan}\left[\sqrt{\frac{\sigma}{\varepsilon}}(1+\mathbf{\bar{Z}})\right]-\operatorname{atan}\left[\sqrt{\frac{\sigma}{\varepsilon}}\right]\right\}>0  \vspace{1ex}\\
\sigma&=\frac{\theta_{r}^{2}|M B K|^{2}}{\theta_{l}\lambda_{\min }(M) \lambda_{\min }(N)}.
\end{aligned}
\end{equation}
\end{theorem}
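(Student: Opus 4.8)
The plan is to split the statement into two parts that are proved in sequence: (i) asymptotic convergence of $\tilde{x}(t)$ to the origin $\tilde{x}^{\star}$ of the error dynamics, and (ii) the strictly positive, tunable lower bound $\tau$ on the inter-event times $t_{k+1}-t_{k}$. Both parts rest on the closed-loop representation (\ref{Lastsystem}) together with the Lyapunov pair $(M,N)$ from (\ref{LyapPQ}) and the triggering rule (\ref{triggerRule}).

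For convergence I would take $V=\tilde{x}^{\mathrm{T}}M\tilde{x}$ and differentiate along (\ref{Lastsystem}); using (\ref{LyapPQ}) this yields $\dot V=-\tilde{x}^{\mathrm{T}}N\tilde{x}-2\tilde{x}^{\mathrm{T}}MBK\eta(t)+2\tilde{x}^{\mathrm{T}}MG\xi(t)$. The purpose of the clock $\mathbf{Z}$ is to keep the held-sample error $\eta(t)$ from (\ref{eventError}) small relative to $\tilde{x}(t)$ on each interval $[t_k,t_{k+1})$: since $\mathbf{Z}(t_k)=\mathbf{\bar Z}$ and $\mathbf{Z}$ remains positive until the next trigger, the dynamics $\dot{\mathbf{Z}}=\omega(\varpi,\varepsilon)$ are designed so that the cross term $-2\tilde{x}^{\mathrm{T}}MBK\eta$ can consume at most a prescribed fraction of $\tilde{x}^{\mathrm{T}}N\tilde{x}$; this is exactly where the split parameters $\theta_l,\theta_r$ enter, via Young's inequality applied to the $\eta$- and $\xi$-cross terms. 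One then obtains $\dot V\le -c_1V+c_2|\xi(t)|^{2}$ for positive constants built from $\lambda_{\min}(N),\lambda_{\min}(M),\lambda_{\max}(M),|MG|,\theta_l,\theta_r$, so the closed loop is ISS with respect to $\xi$; combined with the standing assumption that $\xi$ is bounded and $\xi(t)\to0$, the converging-input/converging-state property gives $V(t)\to0$, i.e. $\tilde{x}(t)\to\tilde{x}^{\star}$.

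For the inter-event bound I would introduce the scalar quantity that $\omega$ monitors — essentially the ratio $\varpi$ of the held-sample error $|\eta(t)|$ to the state $|\tilde{x}(t)|$ (or its square) — which is zero at each triggering instant and then grows while the input is held. Differentiating this quotient along (\ref{eventError}) and (\ref{Lastsystem}) and bounding $|\dot{\tilde{x}}|$ linearly in $|\tilde{x}|$ and $|\eta|$ produces an estimate of the form $\dot\varpi\le\varepsilon+\sigma(1+\varpi)^{2}$ with precisely the $\sigma$ in (\ref{MIET}), the constant being assembled as $\theta_r^{2}|MBK|^{2}$ over $\theta_l\lambda_{\min}(M)\lambda_{\min}(N)$. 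Because $\mathbf{Z}$ is reset to $\mathbf{\bar Z}$ and its evolution is chosen to dominate this growth, a comparison-lemma argument shows $\mathbf{Z}(t)$ stays above the solution $y(t)$ of $\dot y=-\varepsilon-\sigma(1+y)^{2}$, $y(t_k)=\mathbf{\bar Z}$; separating variables through the substitution $u=1+y$ and integrating $\int \mathrm{d}u/(\varepsilon+\sigma u^{2})$ from $u=1$ to $u=1+\mathbf{\bar Z}$ gives the time for $y$ to reach $0$ as exactly the right-hand side of (\ref{MIET}). Hence $t_{k+1}-t_k\ge\tau>0$, which rules out Zeno behaviour; since $\tau$ is an explicit increasing function of the freely chosen $\mathbf{\bar Z}$ and $\varepsilon$ and the bound holds for every initial condition and every admissible disturbance, the IETs are designable and the global robust event-separation property follows.

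The hard part will be the bookkeeping that pins down the exact constant $\sigma$: one must partition $-\tilde{x}^{\mathrm{T}}N\tilde{x}$ into the piece that absorbs the held-sample error (weighted by $\theta_r$), the piece that absorbs the disturbance, and the residual piece that keeps $\dot V$ strictly negative, and then choose the Young's-inequality weights so that the coefficient of $(1+\varpi)^{2}$ in $\dot\varpi$ collapses exactly to $\theta_r^{2}|MBK|^{2}/(\theta_l\lambda_{\min}(M)\lambda_{\min}(N))$, while checking that the additional disturbance terms appearing in $\dot\varpi$ do not corrupt the comparison estimate — this last point is what makes the event-separation genuinely robust rather than merely Zeno-free. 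The remaining ingredients, namely the Lyapunov differentiation, the comparison lemma, and the elementary $\operatorname{atan}$ integral, are routine once $\sigma$ is fixed.
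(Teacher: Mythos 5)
Your proposal diverges from the paper's argument in two places, and one of them is a genuine gap rather than an alternative route. For convergence, the paper does not use $V=\tilde{x}^{\mathrm{T}}M\tilde{x}$ alone: it uses the composite function $W=\tfrac{1}{2}\tilde{x}^{\mathrm{T}}M\tilde{x}+\tfrac{1}{2}\mathbf{Z}\,\eta^{\mathrm{T}}M\eta$, in which the clock multiplies the held-sample error. This is not a cosmetic choice. The trigger (\ref{triggerRule}) is a pure countdown --- it fires when $\mathbf{Z}=0$, not when $|\eta|/|\tilde{x}|$ crosses a threshold --- so between events you have no pointwise bound of the form $|\eta|\le c|\tilde{x}|$, and your cross term $-2\tilde{x}^{\mathrm{T}}MBK\eta$ in $\dot V$ cannot be absorbed into $-\tilde{x}^{\mathrm{T}}N\tilde{x}$. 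The only way the clock design influences the Lyapunov decrease is through the term $\tfrac{1}{2}\omega\,\eta^{\mathrm{T}}M\eta$ that appears in $\dot W$ because $\mathbf{Z}$ sits inside $W$; the condition $\omega<\varpi$, with $\varpi$ defined in (\ref{varphi}) as a function of $|\tilde{x}|/|\eta|$ (note: the reciprocal of the ratio you use), is exactly what makes that term negative enough to dominate the cross terms. Your ISS estimate $\dot V\le -c_1V+c_2|\xi|^2$ does not follow from the ingredients you list.

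The second divergence concerns the bound $\tau$. You propose the classical quotient-differentiation argument: set $\varpi=|\eta|/|\tilde{x}|$, derive $\dot\varpi\le\varepsilon+\sigma(1+\varpi)^2$, and compare. Under the disturbance $\xi$ this step fails: $\dot\eta=-\dot{\tilde{x}}$ contains $G\xi$, so $\frac{d}{dt}\bigl(|\eta|/|\tilde{x}|\bigr)$ carries a term proportional to $|G\xi|/|\tilde{x}|$ that is unbounded when $|\tilde{x}|$ is small --- this is precisely the known failure of robust event separation for relative-threshold ETMs, and you flag it yourself at the end without resolving it. The paper avoids the issue entirely: the inequality $\dot{\mathbf{Z}}=\omega\ge-\sigma(1+\mathbf{Z})^2-\varepsilon$ is obtained \emph{algebraically} from the definition (\ref{omega}) of $\omega$ together with Young's inequality applied to the definition of $\varpi$ (which is how the specific $\sigma$ in (\ref{MIET}) arises), with no differentiation of any ratio and no appearance of $\xi$. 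That is what makes the separation property global and robust. Your final step --- comparing $\mathbf{Z}$ with the solution of $\dot y=-\varepsilon-\sigma(1+y)^2$, $y(t_k)=\mathbf{\bar Z}$, and evaluating $\int_{1}^{1+\mathbf{\bar Z}}\mathrm{d}u/(\varepsilon+\sigma u^2)$ to get the $\operatorname{atan}$ expression --- does coincide with the paper, but the differential inequality feeding that comparison must come from the construction of $\omega$, not from bounding the growth of the error ratio.
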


\begin{proof}
We construct the Lyapunov function candidate $W=\frac{1}{2} \tilde{x}^{\mathrm{T}} M \tilde{x}+ \frac{1}{2} \mathbf{Z} \eta^{\mathrm{T}}M\eta$. Based on (\ref{LyapPQ}), (\ref{eventError}), and (\ref{Lastsystem}), its derivative can be written as
\begin{eqnarray}\label{dotWeq}
\dot{W}&=& \tilde{x}^\mathrm{T}M\dot{\tilde{x}}+\frac{1}{2}\omega\eta^{\mathrm{T}}M\eta+\mathbf{Z}\eta^\mathrm{T}M\dot{\eta}                                                                 \nonumber  \\
       &=&-\frac{1}{2}\tilde{x}^ \mathrm{T} N \tilde{x}  - \tilde{x}^\mathrm{T}MBK\eta + \tilde{x}MG\xi    +\frac{1}{2} \omega  \eta ^\mathrm{T}M\eta                                  \nonumber  \\
       &\;& - \mathbf{Z}\eta^\mathrm{T} M(A-BK)\tilde {x} + \mathbf{Z}\eta^\mathrm{T}MBK\eta -\mathbf{Z}\eta^ \mathrm{T}M\xi.                                             \nonumber 
\end{eqnarray}
By some properties of matrix and norm, we can obtain
\begin{eqnarray}\label{dotWineq}
\dot{W}
   &\le& -\frac{1}{2} \lambda_{\min}(N)|\tilde{x}|^2+|\tilde{x}||MBK||\eta|+\frac{1}{2}\omega\lambda_{\min}(M)|\eta|^2                                                                      \nonumber \\
   &\;& +\mathbf{Z}|\eta||MA||\tilde{x}|+\mathbf{Z}|\eta||MBK||\tilde {x}| +\mathbf{Z}|MBK||\eta|^2                                                                                          \nonumber  \\
   &\;& +\mathbf{Z}|\eta ||M||\xi| +|\tilde{x}||MG||\xi|                             \nonumber,  
\end{eqnarray}
where $\left | \cdot  \right |$ denotes the Euclidean norm for vectors and the induced 2-norm for matrices. We define the variable $\varpi$ as
\begin{equation}\label{varphi}
    \varpi=\frac{\theta_{l}\lambda_{\min}(N)}{\lambda_{\min }(M)} \frac{|\tilde{x}|^{2}}{|\eta |^{2}}-2(1+\mathbf{Z})\frac{\theta_{r} |M B K|}{\lambda_{\min }(M)} \frac{|\tilde{x}|}{|\eta |},
\end{equation}
where $\theta_{l}>1$ and $\theta_{r}<1$ are two new design parameters. Therefore, variable $\omega$ further satisfies the inequality $\omega < \varpi$ to guarantee the asymptotic stability of the closed-loop system (\ref{Lastsystem}). Using Young inequality for $2(1+\mathbf{Z}) \frac{\theta_{r}|M B K|}{\lambda_{\min }(M)} \frac{|\tilde{x}|}{|\eta|}$, we obtain the following inequality:
\begin{equation}
    -2(1+\mathbf{Z}) \frac{\theta_{r}|M B K|}{\lambda_{\min }(M)} \frac{|\tilde{x}|}{|\eta|} \ge -\sigma(1+\mathbf{Z})^2-\frac{{\theta_{r}^{2}|MBK|}^2}{\sigma{\lambda_{\min}}^2(M)}, \frac{{|\tilde{x}|}^2}{|{\eta|}^2}, \nonumber
\end{equation}
which can lead to (\ref{varphi}) arriving at
\begin{equation}
    \varpi \ge -\sigma(1+\mathbf{Z})^{2}+\left(\frac{\theta_{l}\lambda_{\min}(N)}{\lambda_{\min }(M)}-\frac{\theta_{r}^{2}|M B K|^{2}}{\sigma \lambda_{\min}^{2}(M)}\right) \frac{|\tilde {x}|^{2}}{|\eta |^{2}}. \nonumber
\end{equation}
By choosing $\sigma=\frac{\theta_{r}^{2}|M B K|^{2}}{\theta_{l}\lambda_{\min }(M) \lambda_{\min }(N)}$, we have
\begin{equation}
    \varpi-\varepsilon \ge -\sigma(1+\mathbf{Z} )^{2}-\varepsilon, \nonumber
\end{equation}
where $\varepsilon > 0$. Next, we can further design $\omega$ as follows:
\begin{equation}\label{omega}
    \omega=\left\{\begin{array}{ll}
\min (0, \varpi)-\varepsilon, & \eta \neq 0, \\
-\varepsilon, & \eta=0.
\end{array}\right.
\end{equation}
Analyzing the two instances of value of $\omega$, when $\eta \neq 0 $, if $\varpi < 0 $, it can always be obtained that $\omega=\varpi - \varepsilon \ge -\sigma(1+\mathbf{Z})^2-\varepsilon$, and if $\varpi \ge 0$,  $\omega= - \varepsilon \ge -\sigma(1+\mathbf{Z})^2-\varepsilon$. In addition, when $\eta=0$, the inequality also satisfies $\omega= - \varepsilon \ge -\sigma(1+\mathbf{Z})^2-\varepsilon$. Recalling the dynamics of event function $\dot{\mathbf{Z}}(t)=\omega(\varpi, \varepsilon)$, we have $\dot{\mathbf{Z}} \ge -\sigma(1+\mathbf{Z})^2-\varepsilon$ in all instances. By defining $\phi \le \mathbf{Z}$, where $\phi$ is the solution of $\dot{\phi} \ge -\sigma(1+\phi)^2-\varepsilon$ satisfying $\phi(0)=\mathbf{\bar{Z}}$, the IET is lower-bounded by the time $\tau$ takes for $\phi$ to evolve from $\mathbf{\bar{Z}}$ to 0. $\hfill\blacksquare$
\end{proof}

Compared with the results in \cite{DesignableMIET}, the two new design parameters $\theta_l$ and $\theta_r$ are introduced. As a result, the evolution rate of the event function $\mathbf{Z}(t)$ can be further adjusted, meaning that the IETs can be adjusted more freely.

It is worth noting that the derivation of minimum IET $\tau$ in (\ref{MIET}) is independent of the state $\tilde{x}$ or the event triggering error $\eta$, thus we can conclude that the IET-designable ETM features the global robust event-separation property. Meanwhile, solving a strictly positive $\tau$ automatically excludes Zeno behavior.

\begin{remark}
Theorem \ref{theorem-1} and its proof give an improved IET-designable ETM for the lateral dynamics of intelligent vehicle. We can find that a freely designable minimum IET relies on design parameters $\bar{\mathbf{Z}}$ and $\varepsilon$. Additionally, a desired minimum IET can be acquired by adjusting the parameter $\sigma$, which suggests that we can also adjust the matrices $M$ and $N$, control gain $K$, and parameters $\theta_l$ and $\theta_r$. 
\end{remark}
\begin{remark}
If considering that the disturbances $\xi$ is unnecessary, we can obtain the same results as in Theorem \ref{theorem-1}. Therefore, we conclude that the derivation process and the strictly positive minimum IET $\tau$ is guaranteed whether or not disturbances $\xi(t)$ exists. Furthermore, convergent $\xi(t)$ indicates convergent $\tilde{x}(t)$ as a result of the exponential stability of the linear system.
\end{remark}

\section{SIMULATION RESULTS}
In simulation, the entire simulation time is $15$ s and the sampling period is $0.01$ s. We adopt three different strategies to achieve the control object: an improved IET-designable ETM ($\theta_l=8$ and $\theta_r=0.1$), IET-designable ETM in \cite{DesignableMIET} (without $\theta_l$ and $\theta_r$), and time triggering (0.01 s). In addition, in an IET-designable ETM, $\bar{\mathbf{Z}}=1$ and $\varepsilon=1$ are chosen. By \cite{Tagne2016DesignAndComparison}, some values of bicycle model parameters are given as follows: $m=1421\mathrm{~kg}$, $\mu=0.6$, $V_{x}=18\mathrm{m} / \mathrm{s}$, $I_{z}=2570\mathrm{~kg} \cdot \mathrm{m}^{2}$, $C_{f}=170,550 N/rad$, $C_{r}=137,844 N/rad$, $l_{f}=1.191 m$, and $l_{r}=1.513 m$.
We set the initial state $\tilde{x}=[0;0;0;0]$ and choose the weight matrix $Q=\operatorname{diag}\{30,10,1,1\}$ and $R=[1000]$ of the LQR controller. Based on $A$ and $B$ in (\ref{AB}), assuming the matrix $N$ is an identity matrix, we obtain the matrix $M$ by (\ref{LyapPQ}). We next assume the upper-boundedness of the disturbance $\bar{\xi}=[3\times 10^{-4};1\times 10^{-3};0;0]$ and $G=\operatorname{diag}\{1,1,1,1\}$.

\begin{figure}
    \centering
    \includegraphics{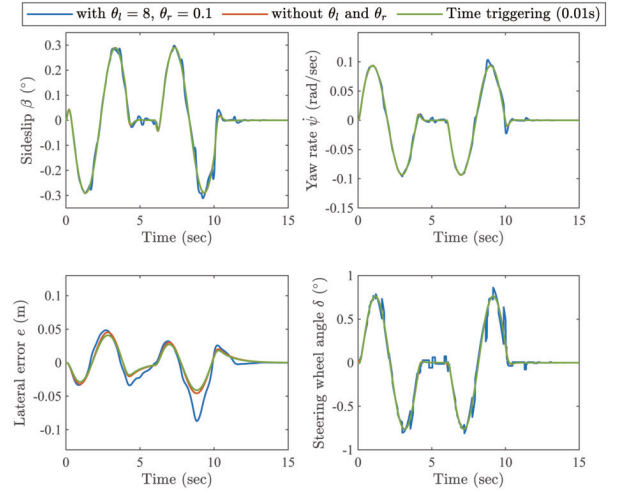}
    \caption{Sideslip $\beta$, yaw rate $\dot{\psi}$, lateral error displacement $e$, and control input steering angle $\delta$.}
    \label{fig:states}
\end{figure}

\begin{figure}
    \centering
    \includegraphics{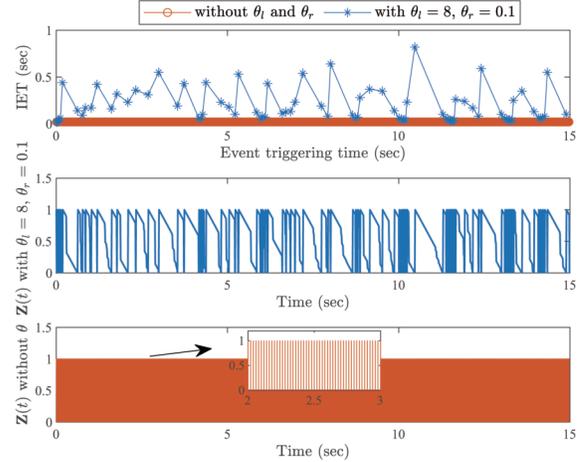}
    \caption{Countdown variable $\mathbf{Z}(t)$ and IET.}
    \label{fig:CountdownAndIET}
\end{figure}

\begin{figure}
    \centering
    \includegraphics{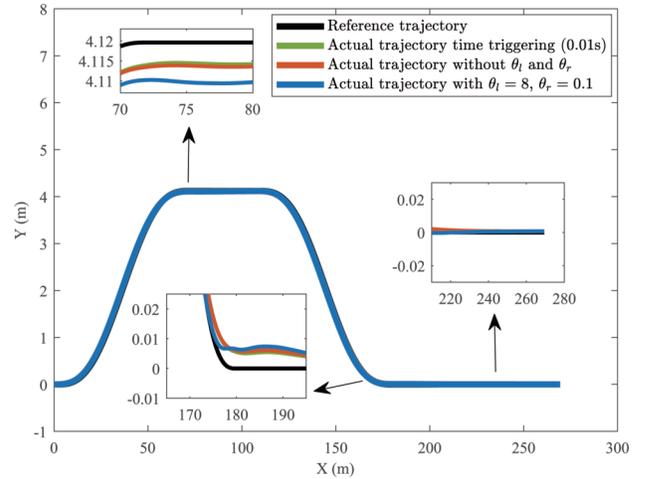}
    \caption{Desired and actual trajectories under different control schemes.}
    \label{fig:trajectory}
\end{figure}

The simulation results of sideslip $\beta$, yaw rate $\dot{\psi}$, lateral error $e$, and control input $\delta$ are computed according to the system state variables $\tilde{x}$ and desired equilibrium point $x^{\star}$, and are presented in \figurename \ref{fig:states}. \figurename \ref{fig:trajectory} displays the actual trajectories of three control strategies and a reference trajectory. \figurename \ref{fig:CountdownAndIET} depicts the countdown variable $\mathbf{Z}$ and IET at every triggering instant and implies the influence of the two new parameters. If we adopt the IET-designable ETM in \cite{DesignableMIET}, the value of $\varpi$ will be too large or too small, with the maximum being approximately $8\times10^8$. Furthermore, this means that we will acquire the same results for values of $\omega$ and $\mathbf{Z}$ as shown in Fig. \ref{fig:CountdownAndIET}. Obviously, this is a heavy computational burden, which contradicts the purpose of this work.

\begin{table}[]
\centering
\caption{\centering\textbf{COMPARISON OF TRIGGERING NUMBERS}}
\label{table:CompareTirggers}

\begin{tabular}{cccc}
\hline \hline
Strategy     & Triggering numbers                      \\  \hline 
Time triggering      & 1500                           \\  
IET-designable ETM in \cite{DesignableMIET} & 749    \\  
IET-designable ETM with $\theta_l=8$ and $\theta_r=0.1$      &83   \\ \hline
\hline
\end{tabular}
\end{table}

We attempted to change the values of design parameters $\bar{\mathbf{Z}}$ and $\varepsilon$, and chose different matrices $M$, $N$, $Q$, and $R$, but these attempts had little effect on the results, implying that the designability of IET-designable ETM in \cite{DesignableMIET} still has some limitations when applied in actual systems. To obtain a desirable control performance, we introduced two new design parameters, $\theta_l\ge1$ and $\theta_r\le1$. This is a natural operation that can ensure that $\omega$ further satisfies $\omega<\varpi$, so that the asymptotic stability of system is guaranteed. We adopted this improved IET-designable ETM, and the two new design parameters were chosen as $\theta_l=8$ and $\theta_r=0.1$. In Fig. \ref{fig:CountdownAndIET} and Table \ref{table:CompareTirggers}, it can be easily seen that the triggering numbers decrease significantly after introducing $\theta_l$ and $\theta_r$, leading to nearly up to a $94\%$ savings of triggering numbers compared to the time triggering control strategy and $88\%$ compared to the IET-designable ETM strategy in \cite{DesignableMIET}. 

It should be noted that when we select different parameters of vehicle models, some fluctuations in the results of saving triggering numbers occur, but this approach can save the triggering numbers by more than $60\%$ by adjusting the values of two new parameters $\theta_l$ and $\theta_r$, in general.

\section{CONCLUSIONS}
This brief focuses on the lateral motion control problem of intelligent vehicle via a malleable event triggering control mechanism for the LQR controller. Simulation results verify the performance gain from saving computational load and communication resources, while achieving the control objective under bounded disturbance. Furthermore, the two new design parameters greatly improve the designability of the IET-designable ETM, which can be developed further to broaden its applicability. In our future work, the dynamical characteristics of the IETs in the proposed IET-designable ETM will be further investigated.

\bibliographystyle{IEEEtran}
\bibliography{ref}
\vfill
\end{document}